\documentclass[11pt, onecolumn]{article}
\usepackage[top=1in, bottom=1in, left=1.25in, right=1.25in]{geometry}

\usepackage{amsfonts}
\usepackage[cmex10]{amsmath}
\usepackage{amsthm}
\usepackage{cite}
\usepackage{color,soul}
\usepackage[pdftex]{graphicx}

\hyphenation{op-tical net-works semi-conduc-tor}

\theoremstyle{definition}
\newtheorem{definition}{Definition}

\theoremstyle{definition}

\theoremstyle{definition}
\newtheorem{lemma}{Lemma}

\theoremstyle{definition}
\newtheorem{theorem}{Theorem}

\theoremstyle{remark}
\newtheorem{remark}{Remark}

\newcommand{\red}{\textcolor[rgb]{0.00,0.00,0.00}}

\begin{document}

\title{On the Null Space Constant for $\ell_p$ Minimization}

\author{Laming~Chen and~Yuantao~Gu 
\thanks{The authors are with the Department of Electronic Engineering, Tsinghua University, Beijing 100084, China. The corresponding author of this work is Yuantao Gu (e-mail: gyt@tsinghua.edu.cn).}}

\date{Submitted December 15, 2014, revised March 2, 2015}

\maketitle

\begin{abstract}
The literature on sparse recovery often adopts the $\ell_p$ ``norm'' $(p\in[0,1])$ as the penalty to induce sparsity of the signal satisfying an underdetermined linear system. The performance of the corresponding $\ell_p$ minimization problem can be characterized by its null space constant. In spite of the NP-hardness of computing the constant, its properties can still help in illustrating the performance of $\ell_p$ minimization. In this letter, we show the strict increase of the null space constant in the sparsity level $k$ and its continuity in the exponent $p$. We also indicate that the constant is strictly increasing in $p$ with probability $1$ when the sensing matrix ${\bf A}$ is randomly generated. Finally, we show how these properties can help in demonstrating the performance of $\ell_p$ minimization, mainly in the relationship between the the exponent $p$ and the sparsity level $k$.

\textbf{Keywords:} Sparse recovery, null space constant, $\ell_p$ minimization, monotonicity, continuity.
\end{abstract}

\section{Introduction}
\label{sec:introduction}

An important problem that often arises in signal processing, machine learning, and statistics is sparse recovery \cite{Baraniuk2007,Wright2009,Meinshausen2009}. It is in general formulated in the standard form
\begin{equation}\label{l0min}
    \underset{\bf x}{\operatorname{argmin}}\red{\|{\bf x}\|_0}\ \ \textrm{subject to}\ \ {\bf Ax}={\bf y}
\end{equation}
where the sensing matrix ${\bf A}\in\mathbb{R}^{M\times N}$ has more columns than rows and the $\ell_0$ ``norm'' $\red{\|{\bf x}\|_0}$ denotes the number of nonzero entries of the vector $\bf x$. The combinatorial optimization \eqref{l0min} is NP-hard and therefore \red{cannot} be solved efficiently \cite{Natarajan1995}. A standard method to solve this problem is by relaxing the non-convex discontinuous $\ell_0$ ``norm'' to the convex $\ell_1$ norm \cite{Candes2006}, i.e.,
\begin{equation}\label{l1min}
    \underset{\bf x}{\operatorname{argmin}}\|{\bf x}\|_1\ \ \textrm{subject to}\ \ {\bf Ax}={\bf y}.
\end{equation}
It is theoretically proved that under some certain \red{conditions \cite{Candes2006,Donoho2006}}, the optimum solution of \eqref{l1min} is identical to that of \eqref{l0min}.

Some works try to bridge the gap between $\ell_0$ ``norm'' and $\ell_1$ norm by non-convex but continuous $\ell_p$ ``norm'' \red{$(0<p<1)$ \cite{Chartrand2007,Saab2008,Foucart2009,Foucartnote2009}}, and consider the $\ell_p$ minimization problem
\begin{equation}\label{lpmin}
    \underset{\bf x}{\operatorname{argmin}}\|{\bf x}\|_p^p\ \ \textrm{subject to}\ \ {\bf Ax}={\bf y}
\end{equation}
where $\|{\bf x}\|_p^p=\sum_{i=1}^N|x_i|^p$. \red{Though finding the global optimal solution of $\ell_p$ minimization is still NP-hard, computing a local minimizer can be done in polynomial time \cite{Ge2011}.} The global optimality of \eqref{lpmin} has been studied and various conditions have been derived, for example, those based on \red{restricted isometry property \cite{Chartrand2007,Saab2008,Foucart2009,Sun2012}} and null space property \cite{Gribonval2007,Foucartnote2009}. Among them, a necessary and sufficient condition is based on the null space property and its constant \cite{Gribonval2007,Foucartnote2009,Chen2014}.

\begin{definition}\label{def_nsp}
\red{For any $0\le p\le1$, define} null space constant $\gamma(\ell_p,{\bf A},k)$ as the smallest quantity such that
\begin{equation}
    \sum_{i\in S}|z_i|^p\le\gamma(\ell_p,{\bf A},k)\sum_{i\not\in S}|z_i|^p
\end{equation}
holds for any set $S\subset\{1,2,\ldots,N\}$ with $\#S\le k$ and for any vector ${\bf z}\in \mathcal{N}({\bf A})$ which denotes the null space of ${\bf A}$.
\end{definition}

It has been shown that for any $p\in[0,1]$, $\gamma(\ell_p,{\bf A},k)<1$ is a necessary and sufficient condition such that for any $k$-sparse ${\bf x}^*$ and ${\bf y}={\bf Ax}^*$, ${\bf x}^*$ is the unique solution of $\ell_p$ minimization \cite{Foucartnote2009}. Therefore, $\gamma(\ell_p,{\bf A},k)$ is a tight quantity in indicating the performance of $\ell_p$ minimization $(0\le p\le 1)$ in sparse recovery. However, it has been shown that calculating $\gamma(\ell_p,{\bf A},k)$ is in general NP-hard \cite{Tillmann2012}, which makes it difficult to check whether the condition is satisfied or violated. Despite this, properties of $\gamma(\ell_p,{\bf A},k)$ are of tremendous help in illustrating the performance of $\ell_p$ minimization, e.g., non-decrease of $\gamma(\ell_p,{\bf A},k)$ in $p\in[0,1]$ shows that if $\ell_p$ minimization guarantees successful recovery of all $k$-sparse signal and $0\le q\le p$, then $\ell_q$ minimization also does \cite{Foucartnote2009}.

In this letter, we give some new properties of the null space constant $\gamma(\ell_p,{\bf A},k)$. Specifically, we prove that $\gamma(\ell_p,{\bf A},k)$ is strictly increasing in $k$ and is continuous in $p$. For random sensing matrix ${\bf A}$, the non-decrease of $\gamma(\ell_p,{\bf A},k)$ in $p$ can be improved to strict increase with probability 1. \red{Based on them, the performance of $\ell_p$ minimization can be intuitively demonstrated and understood.}

\section{Main Contribution}
\label{sec:contribution}

\red{This section introduces some properties of null space constant $\gamma(\ell_p,{\bf A},k)$ $(0\le p\le 1)$.} We begin with a lemma about $\gamma(\ell_p,{\bf A},k)$ which will play \red{a} central role in the theoretical analysis. The spark of a matrix ${\bf A}$, denoted as $\mathrm{Spark}({\bf A})$ \cite{Donoho2003}, is the smallest number of columns from ${\bf A}$ that are linearly dependent.

\begin{lemma}\label{lem_nsp}
Suppose $\mathrm{Spark}({\bf A})=L+1$. For $p\in[0,1]$,
\begin{itemize}
    \item[1)]
    $\gamma(\ell_p,{\bf A},k)$ is finite if and only if $k\le L$;
    \item[2)]
    For $k\le L$, there exist $S'\subset\{1,2,\ldots,N\}$ with $\#S'\le k$ and ${\bf z}'\in \mathcal{N}({\bf A})\setminus\{\bf 0\}$ such that
    \begin{equation}\label{equ10}
        \sum_{i\in S'}|z'_i|^p=\gamma(\ell_p,{\bf A},k)\sum_{i\not\in S'}|z'_i|^p
    \end{equation}
\end{itemize}
\end{lemma}

\begin{proof}
See Section~\ref{ssec:prf_lem_nsp}.
\end{proof}

First, we show the strict increase of $\gamma(\ell_p,{\bf A},k)$ in $k$.

\begin{theorem}\label{thm_nsp_k}
Suppose $\mathrm{Spark}({\bf A})=L+1$. Then for $p\in[0,1]$, $\gamma(\ell_p,{\bf A},k)$ is strictly increasing in $k$ when $k\le L$.
\end{theorem}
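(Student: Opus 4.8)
The plan is to prove the claim in two stages: first record the elementary fact that $\gamma(\ell_p,{\bf A},k)$ is non-decreasing in $k$, and then upgrade non-decrease to strict increase by producing, at level $k+1$, an admissible pair whose ratio strictly exceeds $\gamma(\ell_p,{\bf A},k)$. Non-decrease is immediate from the definition: every index set $S$ with $\#S\le k$ also satisfies $\#S\le k+1$, so the family of constraints defining $\gamma(\ell_p,{\bf A},k+1)$ contains the family defining $\gamma(\ell_p,{\bf A},k)$; since $\gamma$ is the \emph{smallest} quantity making all of its constraints hold, enlarging $k$ cannot decrease it. It is worth noting here that for $k\le L$ the spark hypothesis guarantees $\gamma$ is a genuine finite supremum of ratios: because $\mathrm{Spark}({\bf A})=L+1$, every nonzero ${\bf z}\in\mathcal{N}({\bf A})$ has $\|{\bf z}\|_0\ge L+1>k$, so for any admissible $S$ the denominator $\sum_{i\not\in S}|z_i|^p$ is strictly positive, consistent with Lemma~\ref{lem_nsp}, part~1.

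For the strict part, fix $k$ with $k+1\le L$ and abbreviate $\gamma_k=\gamma(\ell_p,{\bf A},k)$. By Lemma~\ref{lem_nsp}, part~2, there is a maximizing pair $(S',{\bf z}')$ with $\#S'\le k$ and ${\bf z}'\in\mathcal{N}({\bf A})\setminus\{{\bf 0}\}$ satisfying
\begin{equation}
    \sum_{i\in S'}|z'_i|^p=\gamma_k\sum_{i\not\in S'}|z'_i|^p.
\end{equation}
Since $\|{\bf z}'\|_0\ge L+1>k\ge\#S'$, the vector ${\bf z}'$ has a nonzero coordinate outside $S'$; I would pick such an index $j$ and set $S''=S'\cup\{j\}$, so that $\#S''\le k+1$ and the pair $(S'',{\bf z}')$ is admissible at level $k+1$.

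Writing $A=\sum_{i\in S'}|z'_i|^p$, $B=\sum_{i\not\in S'}|z'_i|^p$, and $c=|z'_j|^p>0$, the pair $(S'',{\bf z}')$ has ratio $(A+c)/(B-c)$, with $A=\gamma_k B$. A cross-multiplication (valid once $B>0$ and $B-c>0$) shows that $(A+c)/(B-c)>A/B=\gamma_k$ is equivalent to $c(A+B)>0$, which holds because $c>0$ and $A+B=\|{\bf z}'\|_p^p>0$. By the definition of $\gamma(\ell_p,{\bf A},k+1)$ as the least upper bound on admissible ratios, this gives $\gamma(\ell_p,{\bf A},k+1)\ge(A+c)/(B-c)>\gamma_k$. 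Applying this for each $k$ with $k+1\le L$ yields strict increase on the whole range $k\le L$.

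The main obstacle, and the one step requiring care, is verifying that the new denominator $B-c$ is strictly positive, so that $(S'',{\bf z}')$ is a legitimate finite-ratio constraint at level $k+1$. This is precisely where the spark hypothesis together with the restriction $k+1\le L$ is used: since $\|{\bf z}'\|_0\ge L+1$ and $\#S''\le k+1\le L$, at least $L+1-(k+1)\ge 1$ nonzero coordinates of ${\bf z}'$ remain outside $S''$, forcing $B-c>0$. Once this positivity is secured, the remainder is elementary algebra.
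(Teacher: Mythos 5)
Your proposal is correct and follows essentially the same route as the paper: invoke Lemma~\ref{lem_nsp}, part~2 to obtain an extremal pair $(S',{\bf z}')$ at the lower sparsity level, adjoin one more index at which ${\bf z}'$ is nonzero, and observe that the numerator strictly grows while the denominator strictly shrinks yet stays positive because $\|{\bf z}'\|_0\ge L+1$ exceeds the enlarged support size. The only differences are cosmetic (indexing $k\to k+1$ rather than $k-1\to k$, and an explicit cross-multiplication in place of the paper's direct comparison of sums).
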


\begin{proof}
See Section~\ref{ssec:prf_thm_nsp_k}.
\end{proof}

\begin{remark}
For any $p\in[0,1]$, we can define a set $\mathcal{K}_p({\bf A})$ of all positive integers $k$ that every $k$-sparse ${\bf x}^*$ can be recovered as the unique solution of $\ell_p$ minimization \eqref{lpmin} with ${\bf y}={\bf Ax}^*$. According to Theorem~\ref{thm_nsp_k}, $\mathcal{K}_p({\bf A})$ contains successive integers starting from $1$ to some integer $k_p^*({\bf A})$ and is possibly empty.
\end{remark}

\begin{remark}
\red{If $\mathrm{Spark}({\bf A})=L+1$, then $k_0^*({\bf A})=\lfloor L/2\rfloor$ \cite{Donoho2003}}. Therefore, if $L\ge2$, $k_0^*({\bf A})\ge1$.
\end{remark}

\begin{remark}
For $\bf A$ with identical column norms, if $\mathrm{Spark}({\bf A})=L+1$ and $L\ge2$, then $k_1^*({\bf A})\ge1$. To show this, we only need to prove that $\gamma(\ell_1,{\bf A},1)<1$. First, for any $1\le i\le N$ and ${\bf z}\in\mathcal{N}({\bf A})\setminus\{\bf 0\}$, since ${\bf Az=0}$, $z_i{\bf a}_i=-\sum_{j\ne i}z_j{\bf a}_j$ where ${\bf a}_i$ is the $i$th column of $\bf A$. Since
\[
    |z_i|\cdot\|{\bf a}_i\|_2=\|z_i{\bf a}_i\|_2=\bigg\|\sum_{j\ne i}z_j{\bf a}_j\bigg\|_2\le\sum_{j\ne i}|z_j|\cdot\|{\bf a}_j\|_2
\]
with equality holds only when $z_j{\bf a}_j$ $(j\ne i)$ are all on the same ray, which cannot be true since $\mathrm{Spark}({\bf A})=L+1\ge 3$. Since $\bf A$ has identical column norms, $|z_i|<\sum_{j\ne i}|z_j|$ holds, which leads to $\gamma(\ell_1,{\bf A},1)<1$ because of Lemma~\ref{lem_nsp}.2).
\end{remark}

Now we turn to the properties of $\gamma(\ell_p,{\bf A},k)$ as a function of $p$. The following result reveals the continuity of $\gamma(\ell_p,{\bf A},k)$ in $p$.

\begin{theorem}\label{thm_nsp_p2}
Suppose $\mathrm{Spark}({\bf A})=L+1$. Then for $k\le L$, $\gamma(\ell_p,{\bf A},k)$ is a continuous function in $p\in[0,1]$.
\end{theorem}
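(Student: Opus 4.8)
The plan is to write the null space constant as a maximum of a ratio over a compact set and then treat the interior $p\in(0,1]$ and the endpoint $p=0$ separately. First I would note that, by Definition~\ref{def_nsp}, $\gamma(\ell_p,{\bf A},k)=\sup_{{\bf z}}\sup_{\#S\le k}\big(\sum_{i\in S}|z_i|^p\big)/\big(\sum_{i\notin S}|z_i|^p\big)$, where the inner supremum is attained by taking $S$ to be the $k$ coordinates of largest magnitude. Since the ratio is scale invariant, this gives
\[
 \gamma(\ell_p,{\bf A},k)=\sup_{{\bf z}\in\mathcal{N}_1}\frac{g_p({\bf z})}{h_p({\bf z})},\quad g_p({\bf z})=\max_{\#S\le k}\sum_{i\in S}|z_i|^p,\ \ h_p({\bf z})=\|{\bf z}\|_p^p-g_p({\bf z}),
\]
where $\mathcal{N}_1=\{{\bf z}\in\mathcal{N}({\bf A}):\|{\bf z}\|_2=1\}$ is compact. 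By Lemma~\ref{lem_nsp}.2) the supremum is attained on $\mathcal{N}_1$, and since $\mathrm{Spark}({\bf A})=L+1>k$ every nonzero ${\bf z}\in\mathcal{N}({\bf A})$ has at least $L+1$ nonzero entries, so deleting the $k$ largest leaves at least $L+1-k\ge1$ nonzero entries and hence $h_p({\bf z})>0$ on $\mathcal{N}_1$.

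For the interior, fix $a\in(0,1]$. On $[a,1]\times\mathcal{N}_1$ both $g_p({\bf z})$ and $h_p({\bf z})$ are jointly continuous, being, respectively, a maximum of finitely many continuous functions and a finite sum of the map $(p,t)\mapsto|t|^p$, which is continuous for $p\ge a>0$. Moreover $h_p({\bf z})$ is then a continuous and strictly positive function on the compact set $[a,1]\times\mathcal{N}_1$, hence bounded below by a positive constant, so $g_p/h_p$ is jointly continuous there. A standard maximum-theorem (or uniform-continuity) argument over the compact fibre $\mathcal{N}_1$ then shows that $p\mapsto\max_{{\bf z}\in\mathcal{N}_1}g_p({\bf z})/h_p({\bf z})$ is continuous on $[a,1]$; letting $a\to0$ yields continuity on $(0,1]$.

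The delicate point, which I expect to be the main obstacle, is one-sided continuity at $p=0$, precisely because $(p,t)\mapsto|t|^p$ fails to be continuous at $(0,0)$ and the interior argument collapses there. I would establish it through two matching inequalities. For the lower bound, inserting a fixed $\ell_0$-maximizer ${\bf z}^*\in\mathcal{N}_1$ (which exists by Lemma~\ref{lem_nsp}.2)) and using that $g_p({\bf z}^*)$ and $h_p({\bf z}^*)$ are continuous in $p$ for this fixed vector gives $\liminf_{p\to0^+}\gamma(\ell_p,{\bf A},k)\ge\gamma(\ell_0,{\bf A},k)$. For the upper bound, I would take maximizers ${\bf z}_p\in\mathcal{N}_1$, extract along $p_n\to0^+$ a convergent subsequence ${\bf z}_{p_n}\to\bar{\bf z}\in\mathcal{N}_1$, and exploit the normalization: since $|z_i|\le1$ on $\mathcal{N}_1$ we have $|z_i|^{p}\le1$, so $g_{p_n}({\bf z}_{p_n})\le k$; meanwhile $\bar{\bf z}$ has $\bar s\ge L+1$ nonzero coordinates, so at least $\bar s-k$ of them fall outside the selected set, and for each such fixed coordinate $|(z_{p_n})_i|^{p_n}\to1$, whence $\liminf_n h_{p_n}({\bf z}_{p_n})\ge\bar s-k$. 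Therefore $\limsup_{n}\gamma(\ell_{p_n},{\bf A},k)\le k/(\bar s-k)$, and since $k/(\bar s-k)$ is exactly the value of the ratio at $p=0$ for the feasible vector $\bar{\bf z}$, it is bounded by $\gamma(\ell_0,{\bf A},k)$. The two bounds pinch to give $\lim_{p\to0^+}\gamma(\ell_p,{\bf A},k)=\gamma(\ell_0,{\bf A},k)$, completing the proof.
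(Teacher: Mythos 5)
Your proof is correct, but it takes a genuinely different route from the paper's. The paper starts from an external fact you never invoke --- the non-decrease of $\gamma(\ell_p,{\bf A},k)$ in $p$ (Theorem~5 of the Gribonval--Nielsen reference) --- so that only jump discontinuities are possible, and then shows each one-sided limit equals the value at $p$, using the same basic tools as you (existence of maximizers from Lemma~\ref{lem_nsp}, compactness of $\mathcal{N}_1({\bf A})$, extraction of a convergent subsequence attached to a common index set). Your argument is self-contained: on $[a,1]\times\mathcal{N}_1({\bf A})$ the ratio $g_p/h_p$ is jointly continuous (with $h_p$ bounded away from zero by compactness and the spark condition), so the maximum-theorem argument gives two-sided continuity on $(0,1]$ in one stroke, and you then pinch at $p=0$ with matching $\liminf$/$\limsup$ bounds; monotonicity is never needed. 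Moreover, your handling of $p=0$ is actually more careful than the paper's: the paper's right-continuity step asserts $\theta(p,{\bf z}',S')=\lim_{m}\theta(p+n_m^{-1},{\bf z}^{(n_m)},S')$, which at $p=0$ tacitly uses joint continuity of $(q,t)\mapsto|t|^q$ at $(0,0)$; this fails in general (if $z_i^{(n_m)}=e^{-n_m}\rightarrow 0$ then $|z_i^{(n_m)}|^{1/n_m}\equiv e^{-1}$, not $0$), whereas your bounds $g_{p_n}({\bf z}_{p_n})\le k$ and $\liminf_n h_{p_n}({\bf z}_{p_n})\ge \bar s-k$ use convergence only at coordinates where the limit $\bar{\bf z}$ is nonzero, which is exactly what rescues the inequality $\limsup_{p\to0^+}\gamma(\ell_p,{\bf A},k)\le\gamma(\ell_0,{\bf A},k)$. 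Two small tightenings you should make: since the set of $k$ largest coordinates of ${\bf z}_{p_n}$ varies with $n$, deduce $\liminf_n h_{p_n}({\bf z}_{p_n})\ge\bar s-k$ from $h_{p_n}({\bf z}_{p_n})\ge(\bar s-k)\min_{i\in T}|(z_{p_n})_i|^{p_n}$ where $T$ is the (finite) support of $\bar{\bf z}$; and to convert your subsequential bound into a statement about $\lim_{p\to0^+}$, run the extraction along a sequence realizing $\limsup_{p\to0^+}\gamma(\ell_p,{\bf A},k)$. In short: the paper's proof is shorter given the known monotonicity result; yours avoids that dependence and, as a by-product, repairs the one delicate step at $p=0$.
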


\begin{proof}
See Section~\ref{ssec:prf_thm_nsp_p2}.
\end{proof}

\begin{remark}
Some works have discussed the equivalence of $\ell_0$ and $\ell_p$ minimizations. In \cite{Malioutov2004}, it is shown that the sufficient condition for the equivalence of these two minimization problems approaches the necessary and sufficient condition for the uniqueness of solutions of $\ell_0$ minimization. In \cite{Chartrand2007}, it is shown that for any $k$-sparse ${\bf x}^*$ and ${\bf y}={\bf Ax}^*$, if $\delta_{2k+1}<1$, then there is $p>0$ such that ${\bf x}^*$ is the unique solution of $\ell_p$ minimization. \red{This result is improved to $\delta_{2k}<1$ which is optimal since it is exactly the necessary and sufficient condition for ${\bf x}^*$ being the unique solution of $\ell_0$ minimization \cite{Sun2012}.} \cite{Fung2011} shows the equivalence of the $\ell_0$- and the $\ell_p$-norm minimization problem for sufficiently small $p$. According to Theorem~\ref{thm_nsp_p2}, we can also justify this result: For any $k$-sparse ${\bf x}^*$ and ${\bf y}={\bf Ax}^*$, if $\gamma(\ell_0,{\bf A},k)<1$, then there is $p>0$ such that $\gamma(\ell_p,{\bf A},k)<1$ and ${\bf x}^*$ is the unique solution of $\ell_p$ minimization.
\end{remark}

\begin{remark}
In \cite{Foucartnote2009}, the author defines a set $\mathcal{P}_k({\bf A})$ of reconstruction exponents, that is the set of all exponents $0<p\le 1$ for which every $k$-sparse ${\bf x}^*$ is recovered as the unique solution of $\ell_p$ minimization with ${\bf y}={\bf Ax}^*$. It is shown that $\mathcal{P}_k({\bf A})$ is a (possibly empty) open interval $(0,p_k^*({\bf A}))$ \cite{Foucartnote2009}. This result can be easily shown by Theorem~\ref{thm_nsp_p2}. Since $\gamma(\ell_p,{\bf A},k)$ is a non-decreasing \cite{Gribonval2007} continuous function in $p\in[0,1]$, the inverse image of the open interval $(-\infty,1)$ is also an open interval of $[0,1]$. Therefore, the requirement that $\gamma(\ell_p,{\bf A},k)<1$ is equivalent to $p\in[0,p_k^*({\bf A}))$.
\end{remark}

\begin{figure}[t]
\begin{center}
\includegraphics[width=0.6\textwidth]{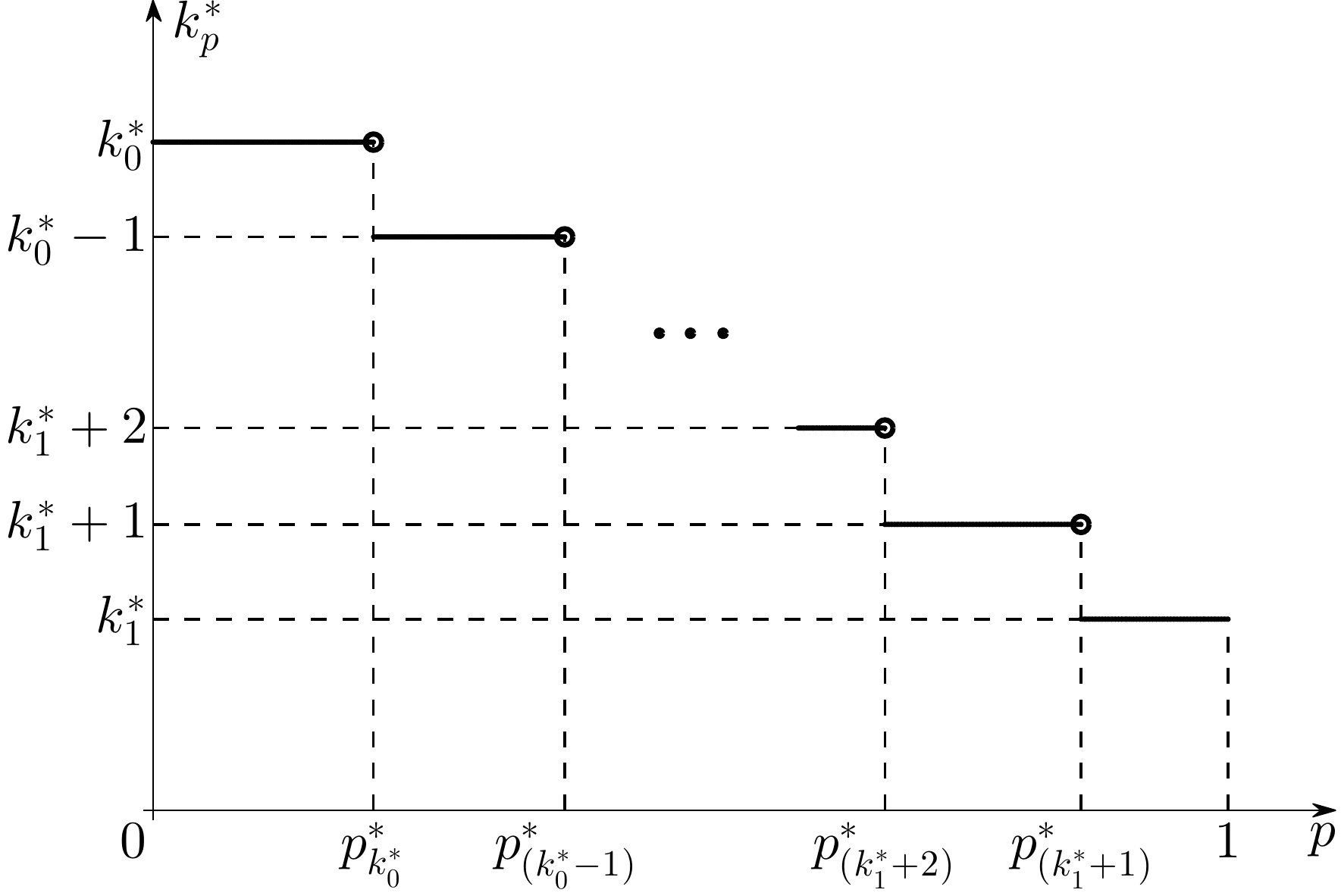}
\caption{The figure shows $k_p^*({\bf A})$ as a function of $p$, where the argument $\bf A$ is omitted for concision.}\label{fig:fig1}
\end{center}
\end{figure}

\begin{remark}
For any $\bf A$, we can plot $k_p^*({\bf A})$ as a function of $p$, as shown in Fig.~\ref{fig:fig1}. For concision, we omit the argument $\bf A$ in the figure. It is obvious that $k_p^*({\bf A})$ is a step function decreasing from $k_0^*({\bf A})$ to $k_1^*({\bf A})$. Three facts needs to be pointed out. First, $k_p^*({\bf A})$ is right-continuous, which is an easy consequence of Theorem~\ref{thm_nsp_p2}. Second, the points $(p_0,k_0)$ corresponding to the hollow circles in Fig.~\ref{fig:fig1} satisfy $\gamma(\ell_{p_0},{\bf A},k_0)=1$. Third, for the $p$-axis $p_0$ of the points of discontinuity, the one-sided limits satisfy $\lim_{p\rightarrow p_0^-}k_p^*({\bf A})-\lim_{p\rightarrow p_0^+}k_p^*({\bf A})=1$. This can be proved by Theorem~\ref{thm_nsp_k} that if $\gamma(\ell_{p_0},{\bf A},k_0)=1$, then $\gamma(\ell_{p_0},{\bf A},k_0-1)<1$.
\end{remark}

Finally, we introduce an important property of $\gamma(\ell_p,{\bf A},k)$ as a function of $p$ with regard to random matrix ${\bf A}$.

\begin{theorem}\label{thm_nsp_p3}
Suppose the entries of ${\bf A}\in\mathbb{R}^{M\times N}$ are i.i.d. and satisfy a continuous probability distribution. Then for $k\le M$, $\gamma(\ell_p,{\bf A},k)$ is strictly increasing in $p\in[0,1]$ with probability one.
\end{theorem}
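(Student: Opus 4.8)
The plan is to upgrade the already-known non-decrease of $\gamma(\ell_p,{\bf A},k)$ in $p$ \cite{Gribonval2007} to a \emph{strict} increase by ruling out the equality case on a probability-one event. First I would record the almost-sure structure of ${\bf A}$: for each fixed choice of $M$ columns the determinant of the $M\times M$ submatrix is a polynomial in the entries that is not identically zero, hence its zero set is Lebesgue-null and, since the i.i.d.\ continuous entries have an absolutely continuous law, that event has probability zero; a union over the finitely many column choices keeps probability zero. Thus $\mathrm{Spark}({\bf A})=M+1$ with probability one, so $L=M\ge k$ and Lemma~\ref{lem_nsp} guarantees that $\gamma(\ell_p,{\bf A},k)$ is finite and that a maximizing pair $(S',{\bf z}')$ exists for every $p\in[0,1]$.

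Next I would make the monotonicity argument quantitative enough to detect when it fails to be strict. Fix $p_1<p_2$ and take a maximizer $(S',{\bf z}')$ for $p_1$; without loss of generality $S'$ collects the $k$ largest magnitudes of ${\bf z}'$, so $a_i:=|z_i'|\ge b_j:=|z_j'|$ for all $i\in S'$, $j\notin S'$. Writing $r_p({\bf z}')=\sum_{i\in S'}|z_i'|^{p}/\sum_{j\notin S'}|z_j'|^{p}$, a direct cross-multiplication gives
\[
    r_{p_2}({\bf z}')-r_{p_1}({\bf z}')=\frac{\sum_{i\in S',\,j\notin S'}a_i^{p_1}b_j^{p_1}\bigl(a_i^{p_2-p_1}-b_j^{p_2-p_1}\bigr)}{\bigl(\sum_{j\notin S'}b_j^{p_1}\bigr)\bigl(\sum_{j\notin S'}b_j^{p_2}\bigr)}\ge 0,
\]
each summand being nonnegative because $a_i\ge b_j$ and $p_2>p_1$. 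Since $\gamma(\ell_{p_1},{\bf A},k)=r_{p_1}({\bf z}')\le r_{p_2}({\bf z}')\le\gamma(\ell_{p_2},{\bf A},k)$, any equality $\gamma(\ell_{p_1},{\bf A},k)=\gamma(\ell_{p_2},{\bf A},k)$ forces every summand above to vanish.

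The key step is to read off the rigidity this imposes. Because any nonzero null vector has support of size at least $\mathrm{Spark}({\bf A})=L+1>k\ge\#S'$, the vector ${\bf z}'$ must have a nonzero entry outside $S'$, say $b_{j_0}>0$; the top-$k$ ordering then forces $a_i>0$ for every $i\in S'$, and vanishing of the summands forces $a_i=b_{j_0}$ for all $i\in S'$ and $b_j\in\{0,b_{j_0}\}$ for all $j\notin S'$. Hence every nonzero entry of ${\bf z}'$ has the common magnitude $b_{j_0}$, so after rescaling ${\bf z}'$ there exists ${\bf w}\in\{-1,0,1\}^N\setminus\{{\bf 0}\}$ with ${\bf Aw=0}$.

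Finally I would bound the probability of this last event. For each fixed ${\bf w}\in\{-1,0,1\}^N\setminus\{{\bf 0}\}$, choosing an index $j$ with $w_j\ne0$ rewrites ${\bf Aw=0}$ as ${\bf a}_j=-w_j^{-1}\sum_{i\ne j}w_i{\bf a}_i$, which pins the continuously distributed column ${\bf a}_j$ (independent of the others) to a single point and therefore occurs with probability zero; a union over the finitely many sign patterns again has probability zero. Intersecting with the full-spark event, on a probability-one set neither full spark fails nor any such ${\bf w}$ exists, so the equality $\gamma(\ell_{p_1},{\bf A},k)=\gamma(\ell_{p_2},{\bf A},k)$ is impossible for every $p_1<p_2$, and with non-decrease this yields strict increase. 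I expect the main obstacle to be the third step—extracting the exact ``all nonzero entries of equal magnitude'' structure from the equality case while correctly excluding the degenerate possibilities (some $a_i=0$, or ${\bf z}'$ supported inside $S'$)—since the probabilistic estimate itself is routine once this rigidity is established.
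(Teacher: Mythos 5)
Your proposal is correct and follows essentially the same route as the paper's proof: establish full spark almost surely, take the maximizing pair $(S',{\bf z}')$ from Lemma~\ref{lem_nsp}.2), show via the cross-term inequality $a_i^{p_1}b_j^{p_1}(a_i^{p_2-p_1}-b_j^{p_2-p_1})\ge0$ that equality of the null space constants forces all nonzero entries of ${\bf z}'$ to share a common magnitude, and then kill this event by a union over the finitely many sign patterns. The only differences are cosmetic choices in justifying the two measure-zero claims (determinant polynomials and column-conditioning, versus the paper's submanifold-dimension and matrix-subspace arguments), which are interchangeable.
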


\begin{proof}
See Section~\ref{ssec:prf_thm_nsp_p3}.
\end{proof}

\begin{remark}
It needs to be noted that there exists ${\bf A}$ such that $\gamma(\ell_p,{\bf A},k)$ is a constant number for all $p\in[0,1]$. For example, for
\begin{equation}
{\bf A}=\frac{1}{\sqrt{2}}\begin{bmatrix} 1 & 1 \\ 1 & 1 \end{bmatrix},
\end{equation}
$\mathrm{Spark}({\bf A})=2$. Since $\mathcal{N}({\bf A})=\textrm{span}([1, -1]^{T})$, \red{it is} easy to check that for all $p\in[0,1]$, $\gamma(\ell_p,{\bf A},1)=1$.
\end{remark}

\begin{figure}[t]
\begin{center}
\includegraphics[width=0.6\textwidth]{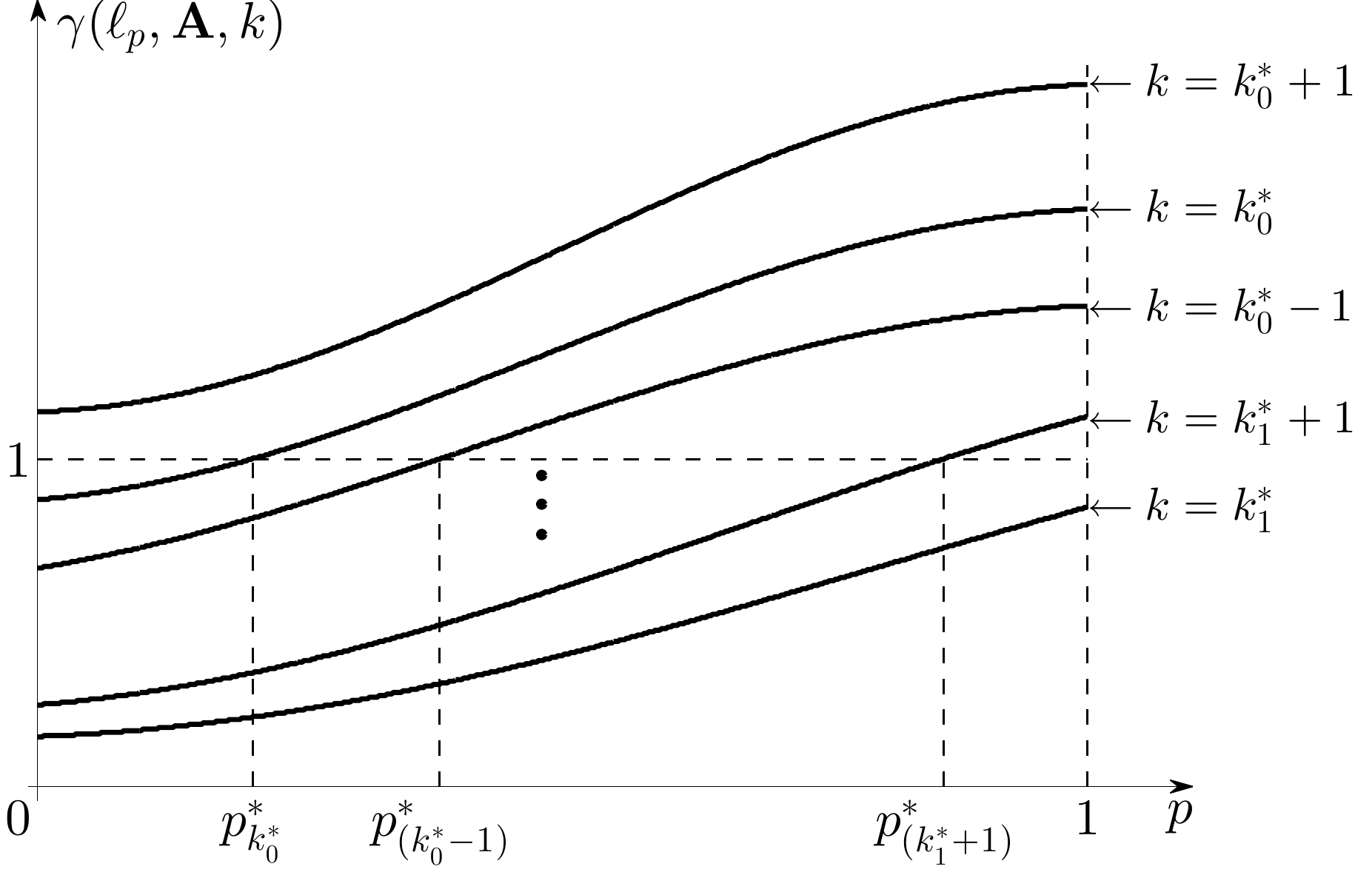}
\caption{This figure shows a diagrammatic sketch of $\gamma(\ell_p,{\bf A},k)$ as a function of $p$ for different $k$ when $\bf A$ is a random matrix.}\label{fig:fig2}
\end{center}
\end{figure}

\begin{remark}
To sum up, we can schematically show $\gamma(\ell_p,{\bf A},k)$ as a function of $p$ for different $k$ in Fig.~\ref{fig:fig2}. According to Theorem~\ref{thm_nsp_k}, these curves are strictly in order without intersections. Theorem~\ref{thm_nsp_p2} reveals that $\gamma(\ell_p,{\bf A},k)$ is continuous in $p$. For a random matrix $\bf A$ with i.i.d. entries satisfying a continuous probability distribution, $\gamma(\ell_p,{\bf A},k)$ is strictly increasing in $p$ with probability 1 by Theorem~\ref{thm_nsp_p3}. According to the definition of $k_p^*({\bf A})$, the curves intersecting $\gamma(\ell_p,{\bf A},k)=1$ $(0\le p\le1)$ are those with $k_1^*({\bf A})+1\le k\le k_0^*({\bf A})$. According to the definition of $p_k^*({\bf A})$, the $p$-axis of these intersections are $p_{k_0^*}^*$, $p_{k_0^*-1}^*$, $\dots$, $p_{k_1^*+1}^*$ from left to right. Therefore, it is easy to derive Fig.~\ref{fig:fig1} based on Fig.~\ref{fig:fig2} when $\bf A$ is a random matrix.
\end{remark}

\section{Proofs}
\label{sec:proof}

\subsection{Proof of Lemma~\ref{lem_nsp}}
\label{ssec:prf_lem_nsp}

\begin{proof}

1) Since $\mathrm{Spark}({\bf A})=L+1$, $\mathcal{N}({\bf A})$ contains an $(L+1)$-sparse signal, and it is easy to show that for any $k\ge L+1$, $\gamma(\ell_p,{\bf A},k)=+\infty$ according to Definition~\ref{def_nsp}. Next we prove that for $k\le L$, $\gamma(\ell_p,{\bf A},k)$ is finite. Define
\begin{equation}
\theta(p,{\bf z},S)=\frac{\sum_{i\in S}|z_i|^p}{\sum_{i\not\in S}|z_i|^p}
\end{equation}
and $\mathcal{N}_1({\bf A})=\mathcal{N}({\bf A})\cap\{{\bf z}:\|{\bf z}\|_2=1\}$ which is a compact set. Then it is easy to see that the definition of null space constant is equivalent to
\begin{equation}\label{equ3}
    \gamma(\ell_p,{\bf A},k)=\max_{\#S\le k}\sup_{{\bf z}\in\mathcal{N}_1({\bf A})}\theta(p,{\bf z},S).
\end{equation}
If $\gamma(\ell_p,{\bf A},k)$ is not finite, then there exists $S'$ with $\#S'\le k$ such that $\sup_{{\bf z}\in\mathcal{N}_1({\bf A})}\theta(p,{\bf z},S')$ is not finite. Therefore, for any $n\in\mathbb{N}^+$, there exists ${\bf z}^{(n)}\in\mathcal{N}_1({\bf A})$ such that
\begin{equation}\label{equ1}
    \theta(p,{\bf z}^{(n)},S')\ge n.
\end{equation}
If $p=0$, since ${\bf z}^{(n)}$ is at least $(L+1)$-sparse, \red{it is} easy to see that $\theta(0,{\bf z}^{(n)},S')\le k$ holds for any $n\in\mathbb{N}^+$. This contradicts \eqref{equ1} when $n>k$. If $p\in(0,1]$, according to Lemma~4.5 in \cite{Foucartnote2009}, $\|{\bf z}^{(n)}\|_p\le N^{\frac{1}{p}-\frac{1}{2}}\|{\bf z}^{(n)}\|_2=N^{\frac{1}{p}-\frac{1}{2}}$, and \eqref{equ1} implies
\begin{equation}\label{equ2}
    \sum_{i\not\in S'}|z_i^{(n)}|^p\le\frac{N^{1-\frac{p}{2}}}{n+1}.
\end{equation}
Due to the compactness of $\mathcal{N}_1({\bf A})$, the sequence $\{{\bf z}^{(n)}\}_n$ has a convergent subsequence $\{{\bf z}^{(n_m)}\}_m$, and its limit ${\bf z}'$ also lies in $\mathcal{N}_1({\bf A})$. Then \eqref{equ2} implies $z'_i=0$ for $i\not\in S'$, i.e., $\mathcal{N}_1({\bf A})$ contains a $k$-sparse element ${\bf z}'$. This contradicts the assumption that $\mathrm{Spark}({\bf A})=L+1>k$.

2) If $p=0$, for any $S$ with $\#S\le k$ and any ${\bf z}\in\mathcal{N}({\bf A})\setminus\{{\bf 0}\}$, it holds that
\begin{equation}
    \theta(0,{\bf z},S)\le\frac{k}{L+1-k}.
\end{equation}
On the other hand, since $\mathrm{Spark}({\bf A})=L+1$, $\mathcal{N}({\bf A})$ contains an $(L+1)$-sparse signal ${\bf z}'$ with $T$ as its support set. For any $S'\subset T$ with $\#S'=k$, $\theta(0,{\bf z}',S')=k/(L+1-k)$, and therefore \eqref{equ10} holds.

If $p\in(0,1]$, recalling the equivalent definition \eqref{equ3}, there exists $S'$ with $\#S'\le k$ such that
\begin{equation}
    \gamma(\ell_p,{\bf A},k)=\sup_{{\bf z}\in\mathcal{N}_1({\bf A})}\theta(p,{\bf z},S').
\end{equation}
Since $\mathcal{N}_1({\bf A})$ is compact and the function $\theta(p,{\bf z},S')$ is continuous in $\bf z$ on $\mathcal{N}_1({\bf A})$, it is easy to show that there exists ${\bf z}'\in\mathcal{N}_1({\bf A})$ such that $\gamma(\ell_p,{\bf A},k)=\theta(p,{\bf z}',S')$.
\end{proof}

\subsection{Proof of Theorem~\ref{thm_nsp_k}}
\label{ssec:prf_thm_nsp_k}

\begin{proof}
We prove that when $p\in[0,1]$ and $2\le k\le L$,
\begin{equation}\label{equ4}
    \gamma(\ell_p,{\bf A},k-1)<\gamma(\ell_p,{\bf A},k).
\end{equation}
According to Lemma~\ref{lem_nsp}.2), there exist $S'$ with $\#S'\le k-1$ and ${\bf z}'\in \mathcal{N}_1({\bf A})$ such that
\begin{equation}\label{equ17}
\gamma(\ell_p,{\bf A},k-1)=\theta(p,{\bf z}',S').
\end{equation}
Since ${\bf z}'$ is at least $(L+1)$-sparse, there exists an index $s'\in\{1,2,\ldots,N\}\setminus S'$ such that $z'_{s'}\ne0$. Let $S''=S'\cup\{s'\}$, then
\begin{equation}
    \sum_{i\in S'}|z'_i|^p<\sum_{i\in S''}|z'_i|^p,\ \ \ \ \sum_{i\not\in S'}|z'_i|^p>\sum_{i\not\in S''}|z'_i|^p>0
\end{equation}
and hence
\begin{equation}\label{equ6}
    \theta(p,{\bf z}',S')<\theta(p,{\bf z}',S'').
\end{equation}
Recalling \eqref{equ17} and the equivalent definition \eqref{equ3}, we can get \eqref{equ4} and complete the proof.
\end{proof}

\subsection{Proof of Theorem~\ref{thm_nsp_p2}}
\label{ssec:prf_thm_nsp_p2}

\begin{proof}
According to Theorem~5 in \cite{Gribonval2007}, $\gamma(\ell_p,{\bf A},k)$ is non-decreasing in $p\in[0,1]$ and therefore can only have jump discontinuities. We show this is impossible by two steps.

First, for any $p\in(0,1]$, we prove the one-sided limit from the negative direction satisfies
\begin{equation}\label{equ9}
    L^-:=\lim_{q\rightarrow p^-}\gamma(\ell_q,{\bf A},k)=\gamma(\ell_p,{\bf A},k).
\end{equation}
According to Lemma~\ref{lem_nsp}.2), there exist $S'$ with $\#S'\le k$ and ${\bf z}'\in\mathcal{N}_1({\bf A})$ satisfying
\begin{equation}
    \gamma(\ell_p,{\bf A},k)=\theta(p,{\bf z}',S').
\end{equation}
According to the definition of $\theta(p,{\bf z},S)$, it is easy to show that
\begin{equation}
    \lim_{q\rightarrow p^-}\theta(q,{\bf z}',S')=\theta(p,{\bf z}',S'),
\end{equation}
and then \eqref{equ9} holds obviously.

Second, for any $p\in[0,1)$, we prove the one-sided limit from the positive direction satisfies
\begin{equation}\label{equ11}
    L^+:=\lim_{q\rightarrow p^+}\gamma(\ell_q,{\bf A},k)=\gamma(\ell_p,{\bf A},k).
\end{equation}
Since $p<1$, there exists $N_0\in\mathbb{N}^+$ such that $p+N_0^{-1}\le1$. Then for $n\ge N_0$, Lemma~\ref{lem_nsp}.2) reveals that there exist $S^{(n)}$ with $\#S^{(n)}\le k$ and ${\bf z}^{(n)}\in\mathcal{N}_1({\bf A})$ such that
\begin{equation}\label{equ12}
    \gamma(\ell_{p+n^{-1}},{\bf A},k)=\theta(p+n^{-1},{\bf z}^{(n)},S^{(n)}).
\end{equation}
Since there are only finite different $S$ satisfying $\#S\le k$, there exists $S'$ with $\#S'\le k$ such that an infinite subsequence of $\{{\bf z}^{(n)}\}_n$ is associated with $S'$. Due to the compactness of $\mathcal{N}_1({\bf A})$, this subsequence has a convergent subsequence $\{{\bf z}^{(n_m)}\}_m$, and its limit ${\bf z}'$ also lies in $\mathcal{N}_1({\bf A})$. According to the definition of $\theta(p,{\bf z},S)$ and \eqref{equ12},
\begin{equation}
    \theta(p,{\bf z}',S')=\lim_{m\rightarrow+\infty}\theta(p+n_m^{-1},{\bf z}^{(n_m)},S')=L^+,
\end{equation}
and consequently $\gamma(\ell_p,{\bf A},k)\ge L^+$. Since $\gamma(\ell_p,{\bf A},k)$ is non-decreasing in $p$, $\gamma(\ell_p,{\bf A},k)\le L^+$ and \eqref{equ11} is proved.
\end{proof}

\subsection{Proof of Theorem~\ref{thm_nsp_p3}}
\label{ssec:prf_thm_nsp_p3}

\begin{proof}
First, we show that $\mathrm{Spark}({\bf A})=M+1$ with probability 1. Let $\mathcal{M}(M)$ denote the $M^2$-dimensional vector space of $M\times M$ real matrices. For any $0\le k\le M$, let $\mathcal{M}_k(M)$ denote the subset of $\mathcal{M}(M)$ consisting of matrices of rank $k$. It can be proved that $\mathcal{M}_k(M)$ is an embedded submanifold of dimension $k(2M-k)$ in $\mathcal{M}(M)$ \cite{Lee2012}. Consequently, for $M\times M$ matrices with i.i.d. entries drawn from a continuous distribution, the $M^2$-dimensional volume of the set of singular matrices $\bigcup_{k=0}^{M-1}\mathcal{M}_k(M)$ is zero. In other words, any $M$, or fewer, random vectors in $\mathbb{R}^M$ with i.i.d. entries drawn from a continuous distribution are linearly independent with probability 1. On the other hand, more than $M$ vectors in $\mathbb{R}^M$ are always linearly dependent. Therefore, $\mathrm{Spark}({\bf A})=M+1$ with probability 1.

Next, with the equivalent definition \eqref{equ3}, we prove that for $k\le M$ and $0\le p<q\le 1$,
\begin{equation}\label{equ13}
    \max_{\#S\le k}\sup_{{\bf z}\in\mathcal{N}_1({\bf A})}\theta(p,{\bf z},S)<\max_{\#S\le k}\sup_{{\bf z}\in\mathcal{N}_1({\bf A})}\theta(q,{\bf z},S)
\end{equation}
holds with probability 1. According to Lemma~\ref{lem_nsp}.2), there exist $S'$ with $\#S'\le k$ and ${\bf z}'\in\mathcal{N}_1({\bf A})$ such that
\begin{equation}\label{equ15}
    \theta(p,{\bf z}',S')=\max_{\#S\le k}\sup_{{\bf z}\in\mathcal{N}_1({\bf A})}\theta(p,{\bf z},S).
\end{equation}
Suppose ${\bf z}'$ has $N_*$ nonzero entries with $T$ as its support set, then $N_*\ge M+1$ with probability 1. \red{It is} obvious that $S'\subset T$, and for any $i\in S'$ and any $l\in T\setminus S'$, $|z'_i|\ge|z'_l|>0$. Since $p<q$, $|z'_i|^{q-p}\ge|z'_l|^{q-p}$ and therefore
\begin{equation}\label{equ7}
    |z'_i|^q|z'_l|^p\ge|z'_i|^p|z'_l|^q.
\end{equation}
Summing \eqref{equ7} with $i$ in $S'$ and $l$ in $T\setminus S'$, we can obtain
\begin{equation}\label{equ14}
    \sum_{i\in S'}|z'_i|^q\sum_{l\in T\setminus S'}|z'_l|^p\ge\sum_{i\in S'}|z'_i|^p\sum_{l\in T\setminus S'}|z'_l|^q
\end{equation}
which is equivalent to
\begin{equation}\label{equ8}
    \theta(p,{\bf z}',S')\le\theta(q,{\bf z}',S').
\end{equation}
Since $p<q$, it is easy to check that the equality in \eqref{equ8} holds only when $|z'_i|=|z'_l|$ for all $i\in S'$ and all $l\in T\setminus S'$, i.e., the nonzero entries of ${\bf z}'$ have the same magnitude. We prove that $\mathcal{N}_1({\bf A})$ contains such ${\bf z}'$ with probability 0, which together with \eqref{equ15} imply that
\begin{equation}
    \gamma(\ell_p,{\bf A},k)=\theta(p,{\bf z}',S')<\theta(q,{\bf z}',S')\le\gamma(\ell_q,{\bf A},k)
\end{equation}
holds with probability 1.

To this end, let $\mathcal{M}(M,N)$ denote the $MN$-dimensional vector space of $M\times N$ real matrices. For fixed ${\bf z}\in\mathbb{R}^N$ with $\|{\bf z}\|_2=1$, it can be easily shown that the subset
\begin{equation}
    \mathcal{M}_{\bf z}(M,N)=\{{\bf A}\in\mathcal{M}(M,N):{\bf Az}={\bf 0}\}
\end{equation}
is an $M(N-1)$-dimensional subspace in $\mathcal{M}(M,N)$. Therefore, for ${\bf A}\in\mathcal{M}(M,N)$ with i.i.d. entries drawn from a continuous probability distribution, $\mathcal{N}_1({\bf A})$ contains ${\bf z}$ with probability 0. In $\{{\bf z}\in\mathbb{R}^N:\|{\bf z}\|_2=1\}$, the number of vectors whose nonzero entries have the same magnitude is
\begin{equation}
    \sum_{i=1}^N{N\choose i}2^i=3^N-1
\end{equation}
which is a finite number. Therefore, with probability 0, $\mathcal{N}_1({\bf A})$ contains a vector ${\bf z}'$ which makes the equality in \eqref{equ8} hold. That is, $\gamma(\ell_p,{\bf A},k)$ is strictly increasing in $p\in[0,1]$ with probability 1.
\end{proof}

\section{Conclusion}
\label{sec:conclusion}

In characterizing the performance of $\ell_p$ minimization in sparse recovery, null space constant $\gamma(\ell_p,{\bf A},k)$ can be served as a necessary and sufficient condition for the perfect recovery of all $k$-sparse signals. This letter derives some basic properties of $\gamma(\ell_p,{\bf A},k)$ in $k$ and $p$. In particular, we show that $\gamma(\ell_p,{\bf A},k)$ is strictly increasing in $k$ and is continuous in $p$, meanwhile for random $\bf A$, the constant is strictly increasing in $p$ with probability 1. Possible future works include the properties of $\gamma(\ell_p,{\bf A},k)$ in $\bf A$, for example, the requirement of number of measurements $M$ to guarantee $\gamma(\ell_p,{\bf A},k)<1$ with high probability when $\bf A$ is randomly generated.

\end{document}